\documentclass{acmtrans2m}

\usepackage{amssymb}
\setcounter{tocdepth}{3}
\usepackage{graphicx}

\usepackage {graphicx}
\usepackage {amsmath}
\usepackage {amssymb}

\newtheorem{theorem}{Theorem}[section]

\newtheorem{proposition}[theorem]{Proposition}

\newdef{definition}[theorem]{Definition}
\newdef{remark}[theorem]{Remark}

\bibliographystyle{acmtrans}

\usepackage{url}
\urldef{\mailsa}\path|zhengbojin@gmail.com| \urldef{\mailsb}\path{ziqinli@public2.bta.net.cn}
\urldef{\mailsc}\path|cgs@tsinghua.edu.cn| \urldef{\mailsd}\path{jimwang@tsinghua.edu.cn}

\markboth{Bojin Zheng et al.}{A Kind of Representation of Common Knowledge...}

\title{A Kind of Representation of Common Knowledge and its Application in Requirements Analysis}

 \author{Bojin Zheng\\ School of Software, Tsinghua University, Beijing 100084,China \\
 South-Central University for Nationalities, Wuhan 430074, China
 \and Deyi Li and Guisheng Chen \\Institute of Chinese Electrical System Engineering, Beijing 100840, China
 \and Jianmin Wang\\ School of Software, Tsinghua University, Beijing 100084,China}

\begin{abstract}
Since the birth of software engineering, it always are recognized as one pure engineering subject,
therefore, the foundational scientific problems are not paid much attention. This paper proposes
that Requirements Analysis, the kernel process of software engineering, can be modeled based on
the concept of ``common knowledge''. Such a model would make us understand the nature of this
process. This paper utilizes the formal language as the tool to characterize the ``common
knowledge''-based Requirements Analysis model, and theoretically proves that : 1) the precondition
of success of software projects regardless of cost would be that the participants in a software
project have fully known the requirement specification, if the participants do not understand the
meaning of the other participants; 2) the precondition of success of software projects regardless
of cost would be that the union set of knowledge of basic facts of the participants in a software
project can fully cover the requirement specification, if the participants can always understand
the meaning of the other participants. These two theorems may have potential meanings to propose
new software engineering methodology.
\end{abstract}

\category{D.2.1}{Software Engineering}{ Requirements/ Specifications}%
[Methodologies]

\category{D.2.8}{Software Engineering}{Metrics}[Software Science]

\terms{Theory}

\keywords{Requirements Analysis, Common Knowledge, Formal Language, Theory of Software Engineering
}

\begin{document}

\setcounter{page}{111}

\begin{bottomstuff}
  Responding Author's address: Deyi Li, FIT 307, Tsinghua University, Beijing 100084,China
\end{bottomstuff}
\maketitle

\section{INTRODUCTION}
Since the birth of computer software engineering, it is regarded as a pure engineering subject.
Among engineering practices, the manufacture industry is recognized as a classic and mature
instance, therefore, the concepts in manufacture industry are widely imported into the field of
software engineering. The production process of computer software are argued to be similar to the
production of a cup, a car or something else. But some obvious facts tells us, the production of
software is certainly different to material production such as car, cup and so on. The most direct
evidence is that: software crises often occur. Even if two softwares are very similar, the
subsequence of software developments may totally differ. Some reports claim that high percent
software projects are over budget, behind schedule, and unreliable. Moveover, some software
productions do not gain high user satisfaction, even that are evaluated as impractical\cite{3}.

Because software engineering is hardly regarded as scientific subject, some foundational
theoretical problems are ignored deliberately or unconsciously, the research papers on the
foundational principles are little published. This paper tries to model the processes of software
development to understand the principles of software engineering, such that to provide some
inspirations to the practices.

In classical waterfall model, the full life-cycle of software development is divided as subsequent
processes, i.e., requirements analysis, design, coding, test, deployment and maintenance. Among
all these processes, requirements analysis is the most important, and compared to requirements
analysis, the other processes are quite technical, because the later processes can be more easily
finished, even some scientists claim that they can be finished automatically, if the requirement
specification is fully, clearly and correctly defined; moreover, most failures of software
projects would owe to the uncertainty of requirements. Therefore, we can focus on the nature of
requirements analysis to explorer the principles of software engineering.

Requirements analysis actually is such a process that the demander and the team of developers
achieve ``common knowledge'', that is, the demander knows what he/she wants to do, the team knows
what the demander wants to do, and the demander knows the team knows what he/she wants to do, and
so on. Based on this idea, we formally model the process of requirement analysis and theoretically
prove two theorems: 1) the precondition of success of software projects regardless of cost would
be that the participants in a software project have fully known the requirement specification, if
the participants do not understand the meaning of the other participants; 2) the precondition of
success of software projects regardless of cost would be that the union set of knowledge of the
participants in a software project can fully cover the requirement specification, if the
participants can always understand the meaning of the other participants.

\section{BASIC IDEAS}

For modeling the process of requirements analysis, we might as well discuss the common process of
requirements analysis in the software industry.

Assumed that the software development involves two persons or companies, we called them Side A and
Side B, Side A has a demand to develop a software, and Side B will take the mission. Side B should
communicate with Side A to know what Side A wants to do. During the communication, Side A and Side
B would revise the specification or make the specification more clearly. Obviously, there is a
final specification for every successful software project. So the aim of requirements analysis is
to make Side A and Side B agree on the final specification. If we see the final specification as a
``prior'', then the aim of requirements analysis is to make the both sides understand the prior and
agree on the prior. However, the subsequent problem would be:  In what circumstance we can conclude
that the project would succeed regardless of cost?  As we know, the subsequent processes after
requirements analysis would be quite technical, hence, the success of projects would seriously
depend on the success of requirements analysis. Technically, if both sides understand and agree on
the prior, the project would succeed regardless of cost.

According to the ideas above, we say that if both sides have fully agreed on the final
specification, then the project shall finally succeed regardless of cost. That is, if both sides
achieve a common knowledge on the prior, the project shall succeed regardless of cost. This
conclusion is based on the fact that if Side A knows every item in the prior, and Side B knows
also, and Side A knows Side B knows, and Side B knows Side A knows, and Side A knows Side B knows
Side A knows, and Side B knows Side A knows Side B knows,\ldots , and so on, then if enough people
force and materials are provided and enough time is given,  the software shall be developed fully
according to the final specification, and the product will be recognized by both Side A and Side
B, that is, the project succeeds.

The researches on ``common knowledge'' origin from the works of John C. Harsanyi, the 1994 Noble
Prize winner and Robert J. Aumann, the 2005 Nobel Prize winner. This concept now has been applied
into various of fields. Halpern and Moses researched the distributed systems by common knowledge
and proved that knowledge can not be gained\cite{2}. The work of K. Mani Chandy and Jayadev Misra
advanced the conclusion above, and proved that knowledge can not be gained and can not be lost in
a system which does not admit of simultaneous events, and theoretically proved that failure
detection is impossible without using time-outs\cite{1}.

In this paper, we try to use the concept of ``common knowledge''  to characterize the process of
requirements analysis, and furthermore, to discover some principles. Based on an intuition that if
two persons have common language, they would have common knowledge,vice versa, we employ formal
language to depict the new model. Of course, the intuition mentioned above can be easily extended
to more than two persons.

\section{Background of MODELING}

For simplify, we only consider two sides: the demander and the team of software developers, and
treat them as two agents.

Every element in final specification or the ``prior'' can be regarded as a point, so the prior can
be regarded as a point set, denoted as $P$. The point set of Side A is denoted as $P_1$, and that
of Side B as $P_2$. Every element in $P$ is denoted as lower character. We call the elements as
``basic facts''.

The process of requirements analysis actually is a process of communication. All the content of
communication would be a set, denoted as $\Omega$.

We denote the languages of Side A and Side B as $L_1$ and $L_2$, and the initial states are $S_1$
and $S_2$.

The predicates of ``knows'' could be encoded into formal language. We use such a code to represent
it: if Side A knows the basic fact $a$, then $a.1$ is an element of language of Side A. That is,
$a$ is the basic fact, the symbol ``.'' is used as a separator and the symbol ``1'' represents
Side A. Similarly, $a.2$ means Side B knows $a$, $a.2.1$ means Side A knows Side B knows a. If we
use a symbol ``T'' represents $a.2$, then ``T.1'' means Side A knows T, i.e., Side A knows Side B
knows $a$.

\begin{definition}[common knowledge]For a given knowledge A, if every side knows A and every side
knows every side knows A and every side knows every side knows every side knows A, etc., then we
say, A is common knowledge. Formally, if $A(.[1|2])^* \in L_1 \wedge A(.[1|2])^* \in L_2$, then we
say A is common knowledge.
\end{definition}

\begin{definition}[common knowledge of basic facts] For a given knowledge A, if $A \in P$ and every side knows A and every side
knows every side knows A and every side knows every side knows every side knows A, etc., then we
say, A is common knowledge. Formally, if $A(.[1|2])^* \in L_1 \wedge A(.[1|2])^* \in L_2$, then we
say A is common knowledge of basic facts.
\end{definition}

If without considering of the cost, then if Side A knows that Side B knows all that Side A knows
and Side A knows all that Side B knows, Side A would ``\textbf{think}'' that Side B would perform
the project well,but actually, when $P_1\subset P$ after the full communication with Side B, Side A
would make an error, because Side A does no have enough knowledge on the basic facts. Therefore, we
give such a definition on success of project (without considering of the cost).

\begin{definition}[Success of project regardless of the cost] If $L_1 = L_2$ and $P_1 \supseteq P$,
then we say the project is successful regardless of the cost.
\end{definition}

We plan to model the process of requirements analysis with two models. One model is assumed that
two agents have less intelligence, that is, they can't understand any meaning come from the other
agent, but only know the contents of messages, i.e., just considering the communication process.
The other model is assumed that two agents have full intelligence, that is, they would understand
the full meaning coming from the other agent. Obviously, human being would be stronger than that
the first model says, and weaker than that the second model says.

\section{Model for communication}
For simplicity, we assume that the final specification has three points, and denoted them as a,b
,c. That is, $P= \{a,b,c\}$.

So the language of Side A is $L_1 =<C, T, S_1, R>$, of Side B is $L_2 =<C, T, S_2, R>$, here

$T=P\cup\{1, 2\}\cup\{.\}$

$C=\{M, N, V, U, Q\}$

$R$ is the rules set.

There are two main rules:

1. Knows himself. For Side A, there exists $S_1 \rightarrow  S_1.1$; For Side B, it is $S_2
\rightarrow  S_2.2$.

2. Makes know. If Side A tells a message $M$ to Side B, then we denote it as $S_2 \xrightarrow{M}
M.1$; Correspondingly, if Side B tells a message $N$ to Side A, then we denote it as $S_1
\xrightarrow{N} N.2 $.

Therefore,\\
 $R=\{ \\
S_1 \rightarrow QV\\
S_2 \rightarrow QU\\
V \rightarrow V.1 \\
U \rightarrow U.2 \\
Q \rightarrow a|b|c \\
V \rightarrow \varepsilon\\
U \rightarrow \varepsilon\\
S_1 \xrightarrow{M} M.2V (M \in L_2)\\
S_2 \xrightarrow{N} N.1U (N \in L_1) \\
 \} $

\subsection{conditions of language equivalence}

We will illustrate the impact of rules by simple examples.

If $P_1=\{a\}$, $P_2=\{b\}$, then if rule 1 takes effects, then the language of Side A is $L_1 =
\{a\} \cup \{a(.1)^*\}$, and of Side B is $L_2 = \{b\} \cup \{b(.2)^*\}$; When Rule 2 works, if
Side A tells Side B the basic fact $a$, then the language of Side B will increase such sentences
as $a.1, a.1(.2)^*$; and when Side A tells $a.1.1$ to Side B, then such sentences as $a.1.1,
a.1.1(.2)^*$, etc.. So the language of Side A would be $\{a\} \cup \{a.1(.[1|2])^*\}\cup
\{b.2(.[1|2])^*\}$ and the language of Side B would be $\{b\} \cup \{b.2(.[1|2])^*\}\cup
\{a.1(.[1|2])^*\}$. Obviously, $L_1 \neq L_2$. That is, if $P_1 \neq P_2$, then $L_1 \neq L_2$.

And then we consider that if $P_1 = L_2$. If $P_1 = \{a\}, P_2 = \{a\}$, when Rule 1 and Rule 2
take affects, $L_1 = \{a(.[1|2])^*\}$, and $L_2 = \{a(.[1|2])^*\}$. That is, $L_1 = L_2$.

Actually, we can theoretically prove that the necessary and sufficient condition of language
equivalence is $P_1 = P_2$.

\begin{proposition}The sufficient condition of language equivalence.
if $L_1 = L_2$ then $P_1 = P_2$
\end{proposition}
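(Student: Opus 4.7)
The plan is to isolate, in each $L_i$, the sublanguage of strings that contain no occurrence of the separator symbol ``.'', and to show that this sublanguage coincides exactly with $P_i$. Once this is done, intersecting the assumed equality $L_1 = L_2$ with the set of dot-free words immediately yields $P_1 = P_2$.

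To carry this out I would first classify the productions in $R$ according to whether they introduce a ``.''. The ``knows himself'' productions $V \to V.1$ and $U \to U.2$ introduce one; so do the communication productions $S_1 \xrightarrow{M} M.2V$ and $S_2 \xrightarrow{N} N.1U$, since $M$ and $N$ are themselves non-empty strings already in the other side's language and the new dot is explicit in the right-hand side. The only rules that do not introduce a dot are $S_1 \to QV$, $S_2 \to QU$, $Q \to a|b|c$, and the $\varepsilon$-rules for $V$ and $U$. Consequently, any dot-free terminal string derivable from $S_i$ must have the form of a single basic fact obtained via $S_i \Rightarrow QV \Rightarrow Q \Rightarrow$ (basic fact), using $V \to \varepsilon$ (or $U \to \varepsilon$). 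Reading the rule $Q \to a|b|c$ the way the worked examples do, namely as $Q \to$ (any element of $P_i$) in side $i$'s grammar, this set of dot-free strings is exactly $P_i$. Intersecting both sides of $L_1 = L_2$ with the regular set $\{a,b,c\}$ then gives $P_1 = P_2$.

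The principal obstacle is interpretive rather than combinatorial. As printed, the rule $Q \to a|b|c$ is symmetric in the two sides and, taken literally, would make every basic fact dot-freely derivable in both $L_1$ and $L_2$ regardless of $P_1$ and $P_2$, which would falsify the proposition. The example with $P_1 = \{a\}$ and $P_2 = \{b\}$ makes clear that the author intends $Q$'s right-hand sides to be restricted to the corresponding $P_i$; pinning down this convention is the only real work in the argument. Once it is fixed, the proof reduces to the one-step separation by the presence or absence of the separator symbol sketched above, and the ``dot-free projection'' can equally be used in the converse direction to recover the necessity claim implicit in the surrounding discussion.
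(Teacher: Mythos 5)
Your proof is correct and follows essentially the same route as the paper: both rest on the explicit characterization of $L_i$ as the side's own facts with arbitrary suffixes plus received facts carrying the other side's marker, and both recover $P_i$ from $L_i$ by isolating the strings with no separator (the paper phrases this as a comparison of the two component families of each language). Your reading of $Q \to a|b|c$ as restricted to $P_i$ in side $i$'s grammar is exactly the convention the paper's worked examples presuppose, so the interpretive issue you flag is resolved the way you guessed.
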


\begin{proof}

According to Rule 1 and Rule 2, $L_1 = \{x.2(.[1|2])^*| x \in P_2\} \cup \{y(.[1|2])^*|y \in P_1\}
$, $L_2 = \{y.1(.[1|2])^*| y \in P_1\} \cup \{x(.[1|2])^*|x \in P_2\} $; because $L_1 = L_2$,
therefore, $\{x.2(.[1|2])^*| x \in P_2\} \subseteq {y(.[1|2])^*|y \in P_1}$, i.e., $P_1 \supseteq
P_2$,$\{y.2(.[1|2])^*| y \in P_1\} \subseteq \{x(.[1|2])^*|x \in P_2\}$,i.e.,$P_2 \supseteq P_1$.
therefore, $P_1 = P_2$.
\end{proof}

\begin{proposition}The necessary condition of language equivalence.
if $P_1 = P_2$ then $L_1 = L_2$
\end{proposition}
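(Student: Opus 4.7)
The plan is to mirror the derivation already used in the proof of Proposition~3.1, but run it in the opposite direction. That proof implicitly established a structural characterization
\[
L_1 = \{x.2(.[1|2])^*\mid x\in P_2\} \cup \{y(.[1|2])^*\mid y\in P_1\},
\]
\[
L_2 = \{y.1(.[1|2])^*\mid y\in P_1\} \cup \{x(.[1|2])^*\mid x\in P_2\},
\]
obtained from the rewrite rules in $R$: starting from $S_i$, Rule~1 appends a trailing $.i$, Rule~2 converts a message received from the other side into a token ending in $.j$ (with $j\neq i$), and the productions $V\to V.1$, $U\to U.2$, together with the $\varepsilon$-productions, close the language under arbitrary suffixes from $(.[1|2])^*$. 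I would first re-derive (or simply quote) this characterization, since it is the same structural fact that the preceding proposition already relied on and it does not depend on whether $P_1=P_2$.

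Next I would substitute the hypothesis $P_1=P_2$, calling the common set $P$. The key observation is that the first summand defining each language is absorbed by the second: since $.2\,\sigma\in(.[1|2])^*$ for every $\sigma\in(.[1|2])^*$, we have
\[
\{x.2(.[1|2])^*\mid x\in P\} \subseteq \{y(.[1|2])^*\mid y\in P\},
\]
and symmetrically $\{y.1(.[1|2])^*\mid y\in P\}\subseteq\{x(.[1|2])^*\mid x\in P\}$. Hence both languages collapse to the single set $\{w(.[1|2])^*\mid w\in P\}$, and $L_1 = L_2$ follows immediately.

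As for obstacles, the argument is essentially bookkeeping on regular expressions, so I do not expect a genuine mathematical difficulty. The one point that deserves care is justifying the structural characterization of $L_1$ and $L_2$ from the rules in $R$: one must check that every derivation from $S_1$ (resp.\ $S_2$) produces a string of the claimed form and, conversely, that every such string is derivable. This amounts to an easy induction on the number of applications of Rule~1, Rule~2, and the production $Q\to a\mid b\mid c$, together with the $V,U$-productions generating arbitrary $(.[1|2])^*$ suffixes. Once that characterization is in hand, the collapse under the hypothesis $P_1=P_2$ is a one-line set inclusion.
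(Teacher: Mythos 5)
Your proof is correct and takes essentially the same route as the paper's: characterize $L_1$ and $L_2$ structurally from the rules in $R$, then substitute $P_1=P_2$. If anything, yours is the more careful version, since the paper's own proof writes $L_1=\{x(.[1|2])^*\mid x\in P_1\}$ outright and silently drops the communication-induced terms $\{x.2(.[1|2])^*\mid x\in P_2\}$ that your absorption step explicitly shows are subsumed.
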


\begin{proof}
1. $\forall x \in P_1, x(.[1|2])^* \subseteq L_1$; \\
$L_1={x(.[1|2])^*|x \in P_1}$\\
2. $\forall x \in P_2, x(.[1|2])^* \subseteq L_2$; \\
$L_2={x(.[1|2])^*|x \in P_2}$\\
3. because $P_1 = P_2$ \\
therefore, $L_1 = L_2$.
\end{proof}

When two sides communicate, how the knowledge transfer between them? Actually, in this model, the
common knowledge of basic facts can not be obtained and can not be lost. This means, the
communication can increase the common knowledge of basic facts for both sides.
\begin{theorem} The common knowledge of basic facts can not be obtained.
\end{theorem}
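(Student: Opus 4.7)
The plan is to prove that no finite sequence of message exchanges can force $A(.[1|2])^* \subseteq L_1 \cap L_2$, by introducing a structural invariant that bounds the complexity of strings actually appearing in the two languages, and then exhibiting members of $A(.[1|2])^*$ whose complexity exceeds that bound.

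I will assign each suffix $\sigma \in \{.1,.2\}^*$ a block complexity $b(\sigma)$ equal to the number of maximal same-symbol runs in $\sigma$, so that $b(\varepsilon)=0$, $b(.1.1)=1$, and $b(.1.2.1)=3$. By induction on the number $t$ of messages exchanged I will show that every string $A.\sigma$ in $L_1 \cup L_2$ satisfies $b(\sigma) \le 2t+1$. The base case is immediate from the initial productions $S_1 \to QV$ with $V \to V.1 \mid \varepsilon$ (and symmetrically for Side B), which only yield $P_i \cdot (.i)^*$, a language of block complexity at most $1$. For the inductive step, when B sends $N = A.\sigma_N \in L_2$ to A, the rule $S_1 \xrightarrow{M} M.2V$ adds to $L_1$ exactly the strings $A.\sigma_N . .2 . (.1)^j$ for $j \ge 0$; a short case analysis on the last run of $\sigma_N$ and on whether $j=0$ shows that appending $.2.(.1)^j$ creates at most two new blocks, giving $b(\sigma_N . .2 . (.1)^j) \le b(\sigma_N) + 2 \le 2(t+1)+1$. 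The A-to-B case is dual.

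To finish, I will observe that $A(.[1|2])^*$ contains strings of arbitrarily large block complexity, for instance $A.(.1.2)^m$ has $b = 2m$. Given any finite communication history of length $t$, choosing $m > t$ produces a string in $A(.[1|2])^*$ whose block count exceeds $2t+1$, so by the invariant it lies outside $L_1 \cup L_2$. Thus $A(.[1|2])^* \not\subseteq L_1 \cap L_2$ at every finite time, and common knowledge of the basic fact $A$ can never be obtained through finitely many communications.

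The hardest part will be the bookkeeping inside the inductive step: I have to verify the $+2$ bound in every configuration, handling separately whether the final run of $\sigma_N$ is a $.1$-run, a $.2$-run, or empty, and whether the trailing $(.1)^j$ contributes a new block or is empty, and I must also confirm that strings surviving from earlier stages (rather than being freshly added) retain the bound. Once that case analysis is clean, the contrast between a finite, linearly growing bound on block complexity and the unbounded block complexity demanded by the infinite mutual-knowledge hierarchy closes the argument in a single counting step.
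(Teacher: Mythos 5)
The paper never actually proves this theorem --- it states it together with its companion and remarks only that ``These two theorems are easy to prove'' --- so there is no official argument to compare yours against; I can only assess your proposal on its own terms and against what the authors most plausibly intend. Under the reading that $L_1$ and $L_2$ are indexed by a finite communication history, your argument is sound: the base languages $P_i(.i)^*$ have block complexity at most $1$, each application of $S_1 \xrightarrow{M} M.2V$ or $S_2 \xrightarrow{N} N.1U$ appends a tail of the form $.i(.j)^*$ and hence at most two new maximal runs, previously derived strings keep their bounds, and so $b(\sigma)\le 2t+1$ holds for every $A.\sigma$ present after $t$ messages, while $A(.1.2)^m$ requires $2m$ blocks. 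Hence the full hierarchy $A(.[1|2])^*$ is never contained in either language at any finite time. This is the classical Halpern--Moses coordinated-attack argument transposed into the paper's grammar, and the finite-time reading is the only one under which the theorem is true: in the paper's own limit characterization $L_1=\{x.2(.[1|2])^*\mid x\in P_2\}\cup\{y(.[1|2])^*\mid y\in P_1\}$ (proof of Proposition 4.1), any $a\in P_1\cap P_2$ \emph{does} become common knowledge after unbounded communication, so you were right to work with finite histories.

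Two caveats. First, the argument the authors most likely have in mind is different and simpler: every communication rule appends a nonempty suffix, so a bare basic fact $a$ lies in $L_i$ if and only if $a\in P_i$; since $a$ itself is the empty-suffix member of $a(.[1|2])^*$, common knowledge of $a$ requires $a\in P_1\cap P_2$, a condition communication cannot create. That one-line invariant also yields the companion ``cannot be lost'' theorem, but it only excludes obtaining common knowledge of facts outside $P_1\cap P_2$. Your invariant is strictly stronger at finite times --- it shows that \emph{no} basic fact is ever common knowledge after finitely many messages, even one in $P_1\cap P_2$ --- at the price of making Theorem 4.2 vacuous under your semantics, since there is then never any common knowledge to lose. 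Second, when you write this up, be explicit that the closure $V\to V.1$ happens inside a single message event and is already absorbed into your trailing $(.1)^j$, and fix the small notational slips: $A.(.1.2)^m$ should be $A(.1.2)^m$, and the message you call $N$ in the B-to-A step is the $M$ of the rule $S_1\xrightarrow{M}M.2V$. With those points addressed the proof is complete.
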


\begin{theorem} The common knowledge of basic facts can not be lost.
\end{theorem}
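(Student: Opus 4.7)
The plan is to prove this as a direct consequence of the monotonicity of the grammars $L_1$ and $L_2$ under the communication dynamics. First I would make explicit what is only implicit above: the process of requirements analysis proceeds in discrete stages, producing sequences of languages $L_1^{(0)}, L_1^{(1)}, \ldots$ and $L_2^{(0)}, L_2^{(1)}, \ldots$, where at each stage either an internal application of Rule 1 or a new communication event (Rule 2) takes place. Common knowledge being ``lost'' then means that for some basic fact $A$ there exist stages $t<t'$ such that $A(.[1|2])^* \subseteq L_1^{(t)} \cap L_2^{(t)}$ but $A(.[1|2])^* \not\subseteq L_1^{(t')} \cap L_2^{(t')}$; the task is to rule this out.

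The key observation is that every rule in $R$ is additive. The self-knowledge productions $V \to V.1$ and $U \to U.2$ only extend derivations by appending $.1$ or $.2$; the ``makes know'' productions $S_1 \xrightarrow{M} M.2V$ and $S_2 \xrightarrow{N} N.1U$ install new start-symbol alternatives without revoking any existing production; and the $\varepsilon$-productions $V \to \varepsilon,\ U \to \varepsilon$ only supply shorter terminations for already-available derivations. Since no production is ever removed and no currently derivable string becomes underivable, I would prove by induction on $t$ that $L_i^{(t)} \subseteq L_i^{(t+1)}$ for $i \in \{1,2\}$.

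Given monotonicity, the theorem is immediate: if at stage $t$ the basic fact $A$ is common knowledge, then by Definition 3.2 both $A(.[1|2])^* \subseteq L_1^{(t)}$ and $A(.[1|2])^* \subseteq L_2^{(t)}$ hold, so for any $t' \ge t$ we have $A(.[1|2])^* \subseteq L_1^{(t')} \cap L_2^{(t')}$, meaning $A$ is still common knowledge of basic facts at stage $t'$. Hence common knowledge of basic facts can not be lost.

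I expect the main obstacle to be the temporal formalization itself, since the paper presents $L_1,L_2$ as static objects rather than as evolving ones, and the dynamic notation $S_i \xrightarrow{M} \cdots$ sits uneasily between ``rule'' and ``event''. Once one fixes a convention under which each communication event simply augments the production set, monotonicity is essentially syntactic. A minor subtlety to address is that Rule 2 requires $M \in L_2$ (respectively $N \in L_1$) at the moment of the event; I would note that this side condition restricts which new productions can fire but never invalidates previously fired ones, so it does not threaten monotonicity. No appeal to the companion Theorem 4.1 (``cannot be obtained'') is needed — that result concerns the reachability of common knowledge, whereas the present claim is purely a preservation statement.
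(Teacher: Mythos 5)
The paper offers no proof of this theorem (it only remarks that it is ``easy to prove''), and your monotonicity argument is exactly the natural way to discharge it: every production in $R$ only adds derivable strings, so $L_1$ and $L_2$ grow monotonically over the communication stages and any $A$ with $A(.[1|2])^* \subseteq L_1 \cap L_2$ retains that property forever. Your proposal is correct, and your explicit temporal staging of the languages is a reasonable (indeed necessary) repair of a formalization the paper leaves implicit.
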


These two theorems are easy to prove.

\subsection{Application}
When applying the model to requirement analysis, we commonly care for the common knowledge of basic
facts. According to this model, we can say, unless the knowledge set of basic facts of both sides
can cover the final specification, the project can not be guaranteed successful without considering
of the cost.

\begin{theorem}If $P_1\supseteq P and P_2\supseteq P$, then $L_1 = L_2$.
\end{theorem}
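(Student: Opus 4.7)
The plan is to reduce this theorem to Proposition 3.2 by showing that the two hypotheses $P_1 \supseteq P$ and $P_2 \supseteq P$ collapse to the equality $P_1 = P_2$, which is exactly the premise Proposition 3.2 needs.

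First I would appeal to the modeling setup in Section~3, where $P$ is the ``prior''/final specification and $P_1$, $P_2$ are the basic-fact sets of Side A and Side B, respectively. The informal convention throughout the paper is that each side's knowledge of basic facts is drawn from the specification, i.e.\ $P_1 \subseteq P$ and $P_2 \subseteq P$. I would state this containment explicitly at the start of the proof so that the argument is self-contained; this is the one step that is not a direct quotation of a previous result, and it is where the main (minor) obstacle lies, since the theorem as literally written is false without it.

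Next, combining $P_1 \subseteq P$ with the hypothesis $P_1 \supseteq P$ yields $P_1 = P$, and symmetrically $P_2 = P$. Hence $P_1 = P_2$.

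Finally, I would invoke Proposition~3.2, whose statement is precisely ``if $P_1 = P_2$ then $L_1 = L_2$,'' to conclude $L_1 = L_2$. Thus the whole proof is essentially one set-theoretic observation followed by a citation; the only subtle point is articulating the implicit containment $P_i \subseteq P$ that makes the hypotheses bite.
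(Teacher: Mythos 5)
Your proposal matches the paper's own proof essentially step for step: the paper likewise asserts $P \supseteq P_1$ (the implicit containment you flag), concludes $P_1 = P$ and $P_2 = P$, and then cites the proposition ``if $P_1 = P_2$ then $L_1 = L_2$.'' You are in fact slightly more careful than the paper in making explicit that this containment is an unstated modeling assumption rather than a hypothesis of the theorem.
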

\begin{proof}
$Because P_1\supseteq P and P\supseteq P_1$ therefore, $P_1 = P$.\\
Similarly,$P_2 = P$. According to proposition 2,$L_1 = L_2$.
\end{proof}

This theorem says that, when only considering the communication and excluding the understanding
factor, only when $P_1 \supseteq P$ and $P_2 \supseteq P$, the project can succeed without
considering of the cost. But for almost projects, $P_2 \supseteq P$ are hardly possible, because
Side B hardly knows the final specification before the communication. This theorem tells the
difficulty of success of project.

\section{Model for fully understanding}

the language of Side A is $L_1 =<C,T,S_1,R>$, of Side B is $L_2 =<C,T,S_2,R>$, here

$T=P\cup\{1,2\}\cup\{.\}$

$C=\{M, N, U, V, Q\}$

There are two main rules:

1. Knows himself. For Side A, there exists $S_1 \rightarrow  S_1.1$; For Side B, it is $S_2
\rightarrow  S_2.2$.

2. Understand. If Side A tells a message $M$ to Side B, and Side B understands $M$,then we denote
it as $S_2 \xrightarrow{M} M.1$,$S_2 \xrightarrow{M} M$; Correspondingly, If Side B tells a message
$N$ to Side A, and Side A understands $N$, then we denote it as $S_1 \xrightarrow{N} N.1 $,$S_1
\xrightarrow{N} N$.

Therefore, \\
$R=\{ \\
S_1 \rightarrow QV\\
S_2 \rightarrow QU\\
V \rightarrow V.1 \\
U \rightarrow U.2 \\
Q \rightarrow a|b|c \\
V \rightarrow \varepsilon\\
U \rightarrow \varepsilon\\
S_1 \xrightarrow{M} M.2V (M \in L_2)\\
S_2 \xrightarrow{N} N.1U (N \in L_1)\\
S_1 \xrightarrow{M} MV (M \in L_2) \\
S_2 \xrightarrow{N} NU (N \in L_1)\\
 \} $

\begin{theorem}If $P_1 \cup P_2\supseteq P$, then $L_1 = L_2$.
\end{theorem}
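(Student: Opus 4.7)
The plan is to exploit the two rules that distinguish the understanding model from the communication-only model, namely $S_1 \xrightarrow{M} MV$ for $M \in L_2$ and $S_2 \xrightarrow{N} NU$ for $N \in L_1$. These rules let each side derive the bare content of a received message, without the quoting suffix ``.1'' or ``.2'', and so should let each side absorb all of the other's basic facts as genuine strings in its own language.

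First I would argue that, for any $a \in P_1$, after Side A transmits $a$ to Side B, Side B can apply $S_2 \xrightarrow{a} aU$, then expand $U$ via $U \to U.2$ and $U \to \varepsilon$, and iteratively invoke Rule 1; this yields every string of the form $a(.[1|2])^*$ inside $L_2$. Symmetrically, every $b \in P_2$ contributes $b(.[1|2])^* \subseteq L_1$. Combined with the self-derivations $S_1 \to QV$ and $S_2 \to QU$ already available, I would then conclude
\[
L_1 \;=\; L_2 \;=\; \{\, x(.[1|2])^* \mid x \in P_1 \cup P_2\,\}.
\]
This is in spirit a mirror of Proposition 2, except that the ``effective'' basic-fact set of each side has been enlarged from $P_i$ to $P_1 \cup P_2$ by the understanding rules.

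I would remark, en route, that the hypothesis $P_1 \cup P_2 \supseteq P$ is not strictly needed for the equality $L_1 = L_2$ itself; what the hypothesis really delivers is the stronger fact that this common language covers the entire prior $P$, which together with $L_1 = L_2$ is exactly the success condition of Definition 3.3. This is presumably the point the theorem is intended to emphasize in the context of Section 5.

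The main obstacle is that the grammar does not by itself prescribe which messages are actually exchanged: nothing in $R$ forces Side A to transmit each $a \in P_1$ to Side B, or Side B to transmit each $b \in P_2$ to Side A. To make the argument go through I would need to either promote ``exhaustive communication'' to an explicit hypothesis (implicit in the informal framing of the section) or build it into the notion of the terminal state of the derivation system. Once that is made precise the rest is a routine expansion of the production rules.
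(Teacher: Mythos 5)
Your proposal is correct and follows essentially the same route as the paper: the paper's own proof likewise applies the understanding rules to conclude $\{z(.[1|2])^*\}\subseteq L_1$ and $\{z(.[1|2])^*\}\subseteq L_2$ for every $z\in P_1\cup P_2$, yielding $L_1=L_2$. The ``exhaustive communication'' assumption you flag is indeed left implicit in the paper as well (its proof silently assumes every basic fact gets transmitted), so your version is, if anything, the more honest rendering of the same argument.
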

\begin{proof}
For any given $x \in P_1$, according to rule 2,$L_2 \supseteq  \{x(.[1|2])^*\}$; \\
For any given $y \in P_1$, $L_2 \supseteq  \{y(.[1|2])^*\}$; \\
Because $P_1 \cup P_2\supseteq P$, therefore, for any given $z \in P$,$\{z(.[1|2])^*\} \subseteq
L_2$; \\
Similarly, for any given $z \in P$,$\{z(.[1|2])^*\} \subseteq
L_1$; \\
Because $P_1 \cup P_2 \subseteq P$, therefore, $L_1 = L_2$.
\end{proof}

This theorem says that, even if the participants can fully understand the other, the success of
project still are constrained to the knowledge of the participants. Lack of knowledge would lead
to the failure. When the success depends on the integrity of knowledge, to add the knowledge, for
examples, consulting to the proper experts, appealing to experienced engineers etc.. would be
helpful to the success.

This theorem can explain that some projects fail and other projects succeed, though the managers
use the same strategy, to inflate the team .

\section{CONCLUSIONS AND FUTURE WORK}
As to software engineering, all people concede that it would be somehow art, partly because the
scientific foundation is ignored. This paper tries to lay such a foundation by mathematic models.
The two proposed models tell the borderline of success of project. At the best, this paper proves
that, the software projects would fail if the knowledge of the software demander and software
provider can't cover the final software specification , that is, lack of some knowledge, even if
both sides have no any obstacle in the communication and can understand all the meanings of the
other.

In another words, we can say, outside of \emph{The Mythical Man-Month},not only the communication
processes, but the lack of knowledge would contribute more to the unavoidable failure of software
projects. This conclusion implies that software manager would pay more attention to the
completeness of knowledge.

\begin{acks} The authors gratefully thank Dr. Chunlai Zhou
for valuable discussions and helps. This work has been supported by the National Grand Fundamental
Research 973 Program of China under Grant No. 2007CB310804 and the National Natural Science
Foundation of China under Grant No. 60496323.
\end{acks}
\bibliography{NRA}

\end{document}